\newtheorem{thm}{Theorem}[section]
\newtheorem{lem}[thm]{Lemma}
\theoremstyle{definition}
\newtheorem{defn}[thm]{Definition}
\begin{document}

\author{George B. Purdy\footnote{george.purdy@uc.edu. Department Of Computer Science, 814 Rhodes Hall, Cincinnati, OH 45221}\\
Justin W. Smith\footnote{smith5jw@email.uc.edu. Department Of Computer Science, 814 Rhodes Hall, Cincinnati, OH 45221} }

\title{On Finding Ordinary or Monochromatic Intersection Points\footnote{AMS MSC-2000: 51A99, 68U05.  ACM CCS-1998: I.3.5.}}
\maketitle

\begin{abstract}
An algorithm is demonstrated that finds an ordinary intersection in an arrangement of $n$ lines in $\mathbb{R}^2$,
not all parallel and not all passing through a common point, in time $O(n \log{n})$. 
The algorithm is then extended to find an ordinary intersection among an arrangement of hyperplanes in $\mathbb{R}^d$, 
no $d$ passing through a line and not all passing through the same point, again, in time $O(n \log{n})$.

Two additional algorithms are provided that find an ordinary or monochromatic intersection, respectively, in an arrangement of pseudolines in time $O(n^2)$.
\end{abstract}

\section{Introduction}
Over a century ago Sylvester posed the question of whether a set of $n$ non-collinear points necessarily 
determine an ordinary line \cite{Syl93}. (An \emph{ordinary} line is one incident to exactly two points.)  
Although it was thought to be true, no proof was found until the problem was raised again by Erd\H{o}s in the 1930's.
Soon after, it was proven by Gallai and his proof was published in \cite{Ste44}. Hence, it is now called the Sylvester-Gallai Theorem.
(See also \cite{Cox48} for an elegant proof by L. M. Kelly.)

Since Sylvester originally posed his question in 1893,
a variety of related questions have been asked.
One well known variation relates to a two-colored, or bichromatic, set of points.
Ron Graham first asked (around 1965, see \cite{Gru99}) 
whether a bichromatic set of non-collinear points necessarily determines a monochromatic line,
i.e., a line determined by two or more points all of which are the same color.
The first published proof was a few years later by Chakerian \cite{Cha70}.
Earlier than Chakerian (and referenced in his paper), 
Motzkin and Rabin had proofs of this result in its dual form.
(Motzkin's proof was published in \cite{EP95}\footnote{Gr\"{u}nbaum states in \cite{Gru99} 
that the proof published in \cite{EP95} is actually due to Motzkin, although the text attributes it to S. K. Stein.}.)
This theorem is now commonly called the Motzkin-Rabin Theorem.

The algorithms in this article deal with arrangements of hyperplanes (i.e., $(d-1)$-flats in $\mathbb{R}^d$) or pseudolines in the euclidean plane.
By duality, the algorithms on hyperplanes can be used, as well, on a point configuration to solve the dual problem.
However, hyperplane arrangements are more general than a dual of points,
e.g., any two points determine a line, but two parallel lines do not determine an intersection point.
Thus, some problem instances (i.e., those involving parallel hyperplanes) can only be solved by algorithms that work in the domain of hyperplane arrangements.
 
The first algorithm presented finds an ordinary intersection point in an arrangement of lines (some possibly parallel) in $\mathbb{R}^2$
in time $O(n \log{n})$.  This algorithm will subsequently be used to solve the same problem for hyperplanes in $\mathbb{R}^d$.

\section{Ordinary Points in an Arrangement of Lines in $\mathbb{R}^2$}
\subsection{Existence of Ordinary Intersection Points}

Dirac conjectured in 1951 that in any set of $2n$ points, there exist $n$ ordinary lines \cite{Dir51}.
The best known lower bound is $\frac{6n}{13}$ ordinary lines in a set of $n$ points, found by Csima and Sawyer in \cite{CS93}.
This improved upon the Kelly and Moser result of $\frac{3n}{7}$ \cite{KM58}.

Since an ordinary line always exists among a set of non-collinear points, an obvious 
question within computational geometry is how to find one. A naive method would potentially take time $O(n^3)$ by considering 
for each point pair whether a third point is collinear. Mukhopadhyay et al. improved this by finding an algorithm
that finds an ordinary line among a set of points in time $O(n \log{n})$ \cite{MAH97}. 
A similar, but simplified, algorithm was demonstrated several years later by Mukhopadhyay and Green \cite{MG07}.

In \cite{Len04}, Lenchner considers the ``sharp dual'' of this problem, 
i.e., ordinary intersections determined by an arrangement of lines in $\mathbb{R}^2$, 
not all parallel and not all passing through a common point.
(Since the ``sharp dual'' is more general than the ``dual'', the Csima and Sawyer result does not apply.)
Lenchner first proved that ordinary intersections occur in such an arrangement, and in fact, 
that there must exist at least $\frac{5n}{39}$ such points among $n$ lines.
He later improved this original result to $\frac{2n-3}{7}$ among $n \geqslant 7$ lines\cite{Len07}.

In the conclusion of \cite{Len04}, Lenchner asks whether an algorithm exists that 
can find an ordinary intersection in such an arrangement in time $o(n^2)$. 
The following algorithm performs in time $O(n \log{n})$.

\subsection{Locating an Ordinary Intersection}
\begin{defn}
Let $L_0$, $L_1$, and $L_2$ be any three lines of $\mathcal{A}$ that intersect at three distinct points. 
Label the lines such that $L_0$ and $L_1$ intersect at point $P$, which is to the left of the intersection $Q$ of lines $L_0$ and $L_1$.
Points $P$ and $Q$ are \emph{consecutive points} (on $L_0$) if no line intersects $L_0$ on the open interval $(P, Q)$.
Furthermore, if $P$ and $Q$ are consecutive points, $L_1$ and $L_2$ are \emph{consecutive lines} (with respect to $L_0$) 
if $L_1$ is the ``rightmost'' line through $P$ and $L_2$ is the ``leftmost'' line through $Q$.  
In other words, there is no line through $P$ intersecting $L_2$ at a point closer to $Q$ than $L_1 \cap L_2$, 
and there is no line through $Q$ closer to $P$ than $L_1 \cap L_2$.
\end{defn}

\begin{lem}
\label{lem:closest_intsct}
Suppose line $L_0$ contains no ordinary points. 
Let $X$ be the closest intersection point above line $L_0$ incident to at least two lines not parallel to $L_0$.
Then, $X$ must be the intersection of two consecutive lines through two consecutive points $P_i$ and $P_{i+1}$.
\end{lem}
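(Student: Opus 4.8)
\emph{Proof plan.}

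The plan is to run a ``lowest triangle'' argument of the kind used in proofs of Sylvester--Gallai type results. First I would choose coordinates so that $L_0$ is the $x$-axis and ``above $L_0$'' means $y>0$; then ``closest'' just means smallest $y$-coordinate, and the only property of $X$ we will use is: \emph{no intersection point of $\mathcal{A}$ that is incident to two lines crossing $L_0$ has $y$-coordinate strictly between $0$ and that of $X$}. Let $M_1,\dots,M_k$ ($k\ge 2$) be the lines through $X$ not parallel to $L_0$, indexed so that they meet $L_0$ at points $A_1<\dots<A_k$; these are distinct, since two distinct lines through $X$ cannot share a point of $L_0$. Let $T$ be the closed triangle with apex $X$ and base $[A_1,A_k]\subset L_0$. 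The one elementary fact used throughout is that the interior of $T$, together with the relative interiors of the two slanted sides $[A_1,X]$ and $[A_k,X]$, lies strictly between $L_0$ and the horizontal line through $X$, hence strictly below $X$.

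The core step is the claim that no line of $\mathcal{A}$ meets $L_0$ at a point strictly between $A_1$ and $A_k$. Suppose $N$ meets $L_0$ at $B\in(A_1,A_k)$. A neighbourhood of $B$ in the open upper half-plane lies in $T$, so the upward ray of $N$ from $B$ enters the interior of $T$, and $N\cap T$ is a segment from $B$ to a second boundary point $Y$; since $N\ne L_0$, this $Y$ is neither on the open base nor at $A_1$ or $A_k$. If $N$ does not pass through $X$, then $Y$ lies in the relative interior of $[A_1,X]$ or of $[A_k,X]$, so $Y=N\cap M_1$ or $Y=N\cap M_k$ is an intersection point incident to two lines crossing $L_0$ and strictly below $X$, contradicting the choice of $X$. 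If $N$ does pass through $X$, then $N=M_j$ with $A_j=B$ and $1<j<k$; now I use the hypothesis that $L_0$ carries no ordinary point to obtain a further line $N'\notin\{L_0,M_j\}$ through $A_j$. Then $N'$ is not parallel to $L_0$, its upward ray from $A_j$ enters one of the two sub-triangles with vertices $A_{j-1},X,A_j$ and $A_j,X,A_{j+1}$ cut off by the cevian $M_j$, and it cannot leave that sub-triangle through the base, through $M_j$, or through $X$ (each of these forces $N'\in\{L_0,M_j\}$); hence it crosses $M_{j-1}$ or $M_{j+1}$ at a point strictly below $X$ incident to two lines crossing $L_0$, again a contradiction.

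Granting the claim, no $A_j$ lies strictly between $A_1$ and $A_k$, so $k=2$. Put $P:=A_1$, $Q:=A_2$, $L_1:=M_1$, $L_2:=M_2$; then $X=L_1\cap L_2$, $P$ is left of $Q$, and the claim says exactly that $P$ and $Q$ are consecutive points, with $T$ the triangle $PXQ$. For the ``consecutive lines'' part: if some line $N\ne L_0$ through $P$ met $L_2$ at a point $Z$ strictly between $Q$ and $X$ on $L_2$ --- which is the meaning of ``closer to $Q$ than $L_1\cap L_2$'' --- then $Z$ would lie in the relative interior of $[Q,X]$, hence strictly below $X$, and $Z=N\cap L_2$ would be incident to two lines crossing $L_0$, a contradiction. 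So $L_1$ is the rightmost line through $P$, and the symmetric argument shows $L_2$ is the leftmost line through $Q$; this is the asserted conclusion.

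The step I expect to be delicate is the second case of the core claim, where the offending line passes through $X$: this is the only place the no-ordinary-point hypothesis enters, and it requires the bookkeeping of which sub-triangle $N'$ enters and the check that it can escape only through $M_{j-1}$ or $M_{j+1}$. The remaining ``a line entering a triangle must leave through a slanted side, creating a lower intersection'' steps are routine once the convexity observation of the first paragraph is recorded. One point worth stating explicitly in the write-up is that ``closer to $Q$'' must be read as ``strictly inside the segment from $Q$ to $L_1\cap L_2$'', since otherwise $L_0$ itself --- which meets $L_2$ at $Q$ --- would vacuously violate the condition.
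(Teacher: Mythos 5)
Your proof is correct and follows essentially the same lowest-triangle argument as the paper: any line meeting $L_0$ strictly between the feet of the lines through $X$ would force an intersection of two lines, each crossing $L_0$, strictly below $X$, and the consecutive-lines condition follows the same way. You are in fact more careful than the published proof at the one delicate spot you flag --- the case where the offending line through the intermediate point passes through $X$ itself, so that the no-ordinary-point hypothesis must be invoked to produce a second line through that point and a sub-triangle analysis is needed --- a case the paper's own argument passes over silently.
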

\begin{proof}
Suppose $X$ is the intersection of lines, $L_1$ and $L_2$, through consecutive points, $P_i$ and $P_{i+1}$, but the lines are not consecutive.
Thus, there exists a line through either $P$ or $Q$ that intersects either $L_1$ and $L_2$ at a point closer than $X$, i.e., a contradiction

So, suppose there are three intersection points, $P$, $Q$, and $R$ on $L_0$ in that order from left to right, 
and $X$ is the intersection of a line through $P$ and a line through $R$.  Then there is another line through $Q$ that intersects one side
of the triangle $\triangle XPR$, interior to the side $PX$ or $RX$.  
Either way, there is an intersection point $S$ that is lower than $X$, i.e, a contradiction.
\end{proof}

\begin{figure}
\centering
\includegraphics[height=8cm,width=8cm]{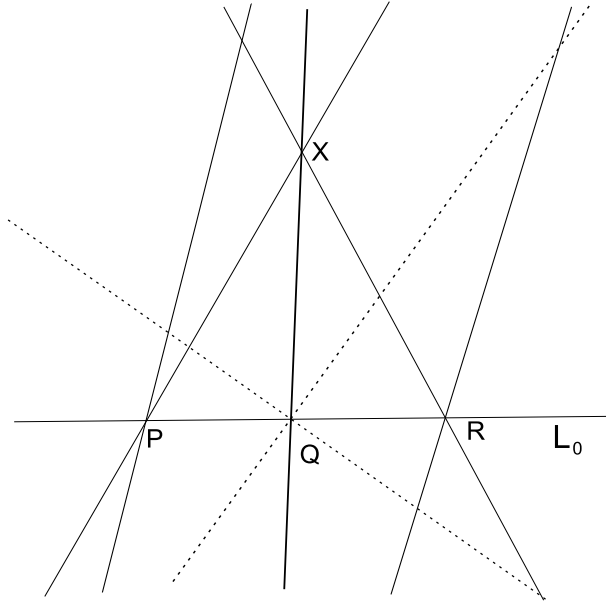}
\caption{A line passing through (non-ordinary) point $Q$ will determine an intersection closer to $L_0$ than $X$.}
\label{fig:closest_intsct}
\end{figure}

\begin{lem}
\label{lem:closest_is_ord}
Suppose line $L_0$ contains no ordinary points, and $X$ is the closest intersection point above $L_0$.
Then $X$ is either an ordinary point, or it has a line $M$ through it parallel to $L_0$.
\end{lem}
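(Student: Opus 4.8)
The plan is to argue by contradiction, pitting the hypothetical existence of a third line through $X$ against the minimality of $X$. Suppose $X$ is \emph{not} ordinary and that no line through $X$ is parallel to $L_0$. Then every line through $X$ meets $L_0$, so $X$ is in particular the closest intersection above $L_0$ that is incident to two lines not parallel to $L_0$, and Lemma~\ref{lem:closest_intsct} lets me write $X = L_1 \cap L_2$, where $L_1$ meets $L_0$ at $P_i$, $L_2$ meets $L_0$ at $P_{i+1}$, and $P_i, P_{i+1}$ are consecutive points on $L_0$ with $P_i$ the left one. Since $X$ is not ordinary, some third line $L_3$ passes through $X$; it meets $L_0$ at a point $R$ that is neither $P_i$ nor $P_{i+1}$ (else $L_3$ would coincide with $L_1$ or $L_2$), and not strictly between them (consecutiveness of $P_i, P_{i+1}$). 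Reflecting the picture if necessary, I may assume $R$ lies strictly to the right of $P_{i+1}$.

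The key device is an ``emptiness'' principle for triangles having a base on $L_0$ and apex $X$: measuring height as distance above $L_0$, every point of the interior of such a triangle, and every point of its open sides, has height strictly between $0$ and that of $X$, so by minimality of $X$ it is not a vertex of $\mathcal{A}$. Applied to $T = \triangle P_i P_{i+1} X$ (sides on $L_0, L_1, L_2$), this shows first that no line of $\mathcal{A}$ other than $L_0, L_1, L_2$ can meet an open side of $T$: on $L_0$ that side is the open interval $(P_i, P_{i+1})$, forbidden by consecutiveness, while on $L_1$ or on $L_2$ such a crossing would be a vertex of $\mathcal{A}$ above $L_0$ and closer than $X$. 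Hence every further line of $\mathcal{A}$ that touches $\overline{T}$ must pass through one of the vertices $P_i$, $P_{i+1}$, $X$. Applied to $T' = \triangle P_{i+1} R X$ (sides on $L_2, L_3, L_0$), the principle shows its interior and its two slanted open sides are free of vertices of $\mathcal{A}$; its base on $L_0$ may carry vertices, but no line meeting $L_0$ at the vertex $P_{i+1}$ can also cross that base, since a line other than $L_0$ meets $L_0$ only once.

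Finally I invoke the hypothesis that $L_0$ has no ordinary points: $P_{i+1}$ lies on a line $L_5 \neq L_0, L_2$. This $L_5$ cannot pass through $X$ (otherwise $L_5$ and $L_2$ would share the two points $P_{i+1}$ and $X$), and a similar two-point argument gives $L_5 \neq L_1, L_3$. Let $\rho$ be the ray of $L_5$ issuing from $P_{i+1}$ into the open upper half-plane; its direction is not that from $P_{i+1}$ toward $X$. The two angular sectors that $T$ and $T'$ cut out at $P_{i+1}$ lie on the two sides of the ray from $P_{i+1}$ toward $X$ and together cover every strictly upward direction except that one, so $\rho$ enters the interior of $T$ or of $T'$. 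In either case $\rho$ must leave that triangle across its boundary at a point other than $P_{i+1}$; by the emptiness principle this point is not on an open side, and it is not another vertex either --- a vertex on $L_0$ would force $L_5 = L_0$, while the apex $X$ is excluded because $X \notin L_5$. This contradiction finishes the argument.

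I expect the crux to be the dichotomy on the direction of $\rho$: when $\rho$ points ``away from'' $T$, the triangle $T$ alone yields nothing, and the remedy --- bringing in the second empty triangle $T'$ furnished by the third line $L_3$, and noting that $L_5$, which already meets $L_0$ at $P_{i+1}$, cannot escape $T'$ through its base --- is the one genuinely non-routine step; the rest is bookkeeping about which pairs among $L_0, L_1, L_2, L_3, L_5$ are forced to coincide.
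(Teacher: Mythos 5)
Your proof is correct and follows essentially the same route as the paper's: a third line through $X$ would meet $L_0$ at some $R$ beyond the consecutive pair, and since $P_{i+1}$ is not ordinary, another line through it must cross an open slanted side of $\triangle P_i P_{i+1} X$ or $\triangle P_{i+1} R X$, producing an intersection strictly between $L_0$ and $X$. The only quibble is that your stated ``emptiness principle'' is literally false for the base side (which has height $0$), but you treat the base separately in both applications, so nothing is lost.
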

\begin{proof}
Let $P$ and $Q$ be points in order from left to right on $L_0$ that contain the lines forming $X$.  
(By Lemma \ref{lem:closest_intsct}, $P$ and $Q$ are consecutive points and $X$ is formed by consecutive lines through these.)
Suppose that there is a third line $\overline{XR}$ through $X$, where $R$ is another point on $L_0$.  
Without loss of generality let's suppose that $R$ is to the right of $Q$.  
There is another line through $Q$ that intersects either segment $PX$ or segment $RX$, 
and either way $X$ is not the lowest point, i.e., a contradiction.
See Figure \ref{fig:closest_intsct}.
\end{proof}

Together, Lemmas \ref{lem:closest_intsct} and \ref{lem:closest_is_ord} can be used to prove the dual form of the Sylvester-Gallai Theorem.
(A configuration of points can always be dualized such that no two lines are parallel.)
From the algorithm below, one can also see proof of its ``sharp dual'' form.

\subsection{Algorithm to Find an Ordinary Point in Time $O(n \log{n})$ }
\label{sub:alg-lines}
\begin{thm}
Given an arrangement of $n$ lines in $\mathbb{R}^2$, not all parallel and not all passing through the same point, 
there exists an algorithm to find a ordinary intersection point in time $O(n \log{n})$.
\end{thm}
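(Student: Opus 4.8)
The plan is to reduce to Lemmas~\ref{lem:closest_intsct} and~\ref{lem:closest_is_ord} using one chosen line and a single sort. First I would pick an arbitrary line $L_0\in\mathcal{A}$ and compute the points where the remaining lines meet $L_0$, sorted along $L_0$ --- equivalently, sort those lines by the coordinate of their meeting point --- in $O(n\log n)$ time; collecting the lines through each meeting point is then $O(n)$, and since $\mathcal{A}$ is not all parallel, at least one meeting point exists. If some meeting point on $L_0$ is incident to exactly one further line, it is an ordinary point, which we output. Otherwise $L_0$ contains no ordinary point, so both lemmas are available, and we search above $L_0$ (the search below being symmetric).

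To disable the ``line $M$ parallel to $L_0$'' alternative in Lemma~\ref{lem:closest_is_ord}, I would delete from $\mathcal{A}$ every line parallel to $L_0$, obtaining $\mathcal{A}'$. This deletion leaves unchanged both the points on $L_0$ and the lines through them, so $L_0$ still has no ordinary point in $\mathcal{A}'$, while $\mathcal{A}'$ has no line parallel to $L_0$. By Lemma~\ref{lem:closest_intsct}, the closest intersection $X$ above $L_0$ in $\mathcal{A}'$ is the meeting point of the two consecutive lines through two consecutive points $P_i, P_{i+1}$ of $L_0$; hence $X$ lies among the $O(n)$ candidates obtained, for each adjacent pair $P_i,P_{i+1}$ of meeting points, by intersecting the consecutive lines through them (the rightmost line up through $P_i$ with the leftmost line up through $P_{i+1}$), all readable off the sort in $O(n)$ further time. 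Taking the candidate closest to $L_0$ yields $X$, which by Lemma~\ref{lem:closest_is_ord} (no parallel of $L_0$ remains) is ordinary in $\mathcal{A}'$. If $X$ lies strictly below the lowest line of $\mathcal{A}$ parallel to $L_0$ and above $L_0$, no deleted line passes through $X$, so $X$ is ordinary in $\mathcal{A}$ as well, and we output it.

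What remains are two degenerate possibilities: (i) the closest intersection above $L_0$ in $\mathcal{A}'$ --- and symmetrically the closest below --- sits exactly at the height of a deleted parallel line, and (ii) $\mathcal{A}'$ has no intersection off $L_0$. In case (ii), $\mathcal{A}'$ is a pencil through a point $Z\in L_0$; because $\mathcal{A}$ is not concurrent there is a line $M\neq L_0$ parallel to $L_0$, and each intersection point of $M$ is incident only to $M$ and a single pencil line, hence ordinary, so we output one. In case (i), letting $M$ be the lowest line parallel to $L_0$ above it, the open strip between $L_0$ and $M$ contains no intersection point at all, so replacing $L_0$ by $M$ moves to the next parallel ``layer''; iterating this steps strictly through the finite parallel class of $L_0$ and therefore terminates, and once the extreme line of that class is reached the parallel alternative of Lemma~\ref{lem:closest_is_ord} cannot recur. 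I expect the main obstacle to be keeping this parallel-line bookkeeping within the $O(n\log n)$ budget: beyond verifying the two degenerate configurations, one must bound the total work of the layer iteration --- reusing the initial sort and showing that only a controlled number of layers need be examined before an ordinary point is exposed --- and it is exactly this need for the layer structure, rather than a bare citation of the Sylvester--Gallai theorem, that justifies the two preparatory lemmas.
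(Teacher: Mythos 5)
Your main line of attack coincides with the paper's: choose a baseline $L_0$, sort the intersection points along it into bundles, read off the consecutive-line candidates, and invoke Lemmas~\ref{lem:closest_intsct} and~\ref{lem:closest_is_ord} to conclude that the lowest candidate $X$ above $L_0$ is ordinary unless a line parallel to $L_0$ interferes. Up to that point the argument and the $O(n\log n)$ accounting are sound (and your pencil case (ii) correctly patches the fact that you take $L_0$ arbitrary rather than from a triangle, as the paper does).

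The gap is in your handling of the parallel obstruction, and it is exactly the part you flag as ``the main obstacle.'' Your layer iteration is not an algorithm yet: the parallel class of $L_0$ can contain $\Theta(n)$ lines, and each step of the iteration (re-bundling on the new baseline $M$, recomputing consecutive lines, finding the new closest intersection) costs $\Omega(n)$, so without a bound on the number of layers you only get $O(n^2)$. You give no such bound, and ``terminates because the parallel class is finite'' does not address the complexity. There are also two loose ends in the case analysis: (a) the case where $X$ lies strictly \emph{above} the lowest deleted parallel $M$ is neither your good case nor your case (i) --- though it is easily repaired, since then no two lines of $\mathcal{A}'$ meet on $M$ and every intersection on $M$ is ordinary; and (b) reaching the extreme line of the parallel class does not obviously finish, since the closest intersection on the far side of that line need not exist on the side free of parallels. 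The paper avoids all of this with a single $O(1)$ step: when a parallel line $M$ passes through $X$, it takes $Y$ to be the intersection of $M$ with the leftmost line through the leftmost bundle $P_1$ and shows directly that $Y$ is ordinary --- any third line through $Y$ would, together with a second line through $P_1$, create an intersection strictly below $X$, contradicting the minimality of $X$. Replacing your iteration with that one observation closes the gap; as written, the proof does not establish the $O(n\log n)$ bound.
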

\begin{proof}
Let $n$ be the number of lines in arrangement $\mathcal{A}$.
Let $L_0$, $L_1$, and $L_2$ be any three lines of $\mathcal{A}$ that intersect at three distinct points.
If no such lines exist then the arrangement consists of only two families of parallel lines 
and every intersection is ordinary, so suppose this not to be the case. 
\emph{Time to find $L_0$, $L_1$, and $L_2$: $O(n)$.}

Consider $L_0$ to be horizontal, and $L_1$ and $L_2$ to be intersecting ``above'' $L_0$.
Find all of the intersection points on $L_0$. Label them from left to right $P_1$, $P_2$, \ldots, $P_m$.
\emph{Time to sort them, collecting potentially multiple lines into each $P_k$: $O(n \log{n})$.}

If any $P_k$ is ordinary the algorithm is done, so suppose that none are.
 
Find the ``leftmost'' and ``rightmost'' line through each $P_k$, i.e., find the pairs of consecutive lines.  \emph{Time: $O(deg(P_1)+deg(P_2)+\ldots+deg(P_m))=O(n)$}.

Let $X$ be the lowest intersection point above $L_0$, which by Lemma \ref{lem:closest_intsct} 
must be the intersection of consecutive lines through consecutive bundles $P_k$ and $P_{k+1}$.
\emph{Time: $O(n)$}.

\begin{figure}
\centering
\includegraphics[height=7cm,width=8cm]{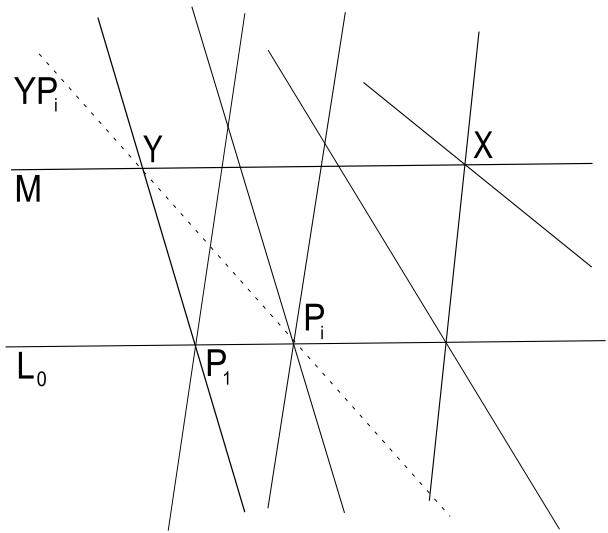}
\caption{If $X$ is the lowest intersection determined by lines not parallel to $L_0$, then $Y$ must be ordinary.}
\label{fig:par_through}
\end{figure}

Determine whether there is a line $M$ parallel to $L_0$ that passes through $X$, and if so, then find $M$. \emph{Time: $O(n)$.}
If there is no such line $M$ then, by Lemma \ref{lem:closest_is_ord}, $X$ is an ordinary point.

Otherwise, suppose that $M$ exists. Let $Y$ be the intersection of $M$ with the leftmost line from the leftmost bundle $P_1$. 

Assume $Y$ is not an ordinary point. Then, there exists a point $P_k$, $k > 1$, such that $YP_k$ is a line of the arrangement, and there is another line
through $P_1$ that intersects $YP_k$ in its interior at a point lower than $Y$ and therefore lower than $X$, i.e., a contradiction.  Hence $Y$ is 
an ordinary point. (See Figure \ref{fig:par_through}.) \emph{Time to find $Y$: $O(1)$.}
\end{proof}

\section{Ordinary Points in an Arrangement of Hyperplanes in $\mathbb{R}^d$}
\subsection{Duality}
Given $n$ points in $\mathbb{R}^3$, no three on a line and not all on a plane, does there necessarily exist a three-point plane?
Recently, the present authors proved that, under the same hypothesis, 
there must exist at least $\frac{4}{13}\binom{n}{2}$ such planes \cite{PS09}.
Without the assumption that no three points are collinear, 
Bonnice and Kelly showed that there must exist at least $\frac{3n}{11}$ ordinary planes in $3$-space \cite{BK71}.
The existence of such a plane also follows from Hansen's result on the existence of 
\emph{ordinary} hyperplanes in $d$-dimensional projective space \cite{Han65}. 
(In $3$-space their existence was proved earlier by Motzkin, but for $4$ or higher it's due to Hansen.)
An ordinary hyperplane in $d$-space is one in which all but one of the points lie on a $(d-2)$-flat.

The algorithm presented in Section \ref{sub:alg-hyperplanes} solves the following problem.
Find the intersection point of exactly $d$ hyperplanes in $\mathbb{R}^d$ (i.e., an ordinary intersection)
in an arrangment of hyperplanes, not all parallel, not all passing through the same point, and no $d$ passing through a line.
This problem on hyperplanes is the ``sharp dual'' of a problem on points. That is, given a set of $n$ points in $\mathbb{R}^d$, 
no $d$ on a $(d-2)$-flat and not all on a hyperplane, find a hyperplane determined by exactly $d$ points.
(It follows from this hypothesis that no $k$ points lie on a $(k-2)$-flat for $3\leqslant k\leqslant d$.)

For the convenience of the reader, we provide the following correspondences between flats and their duals:
\begin{itemize}
\item Hyperplanes $\longleftrightarrow$ Points
\item $(d-2)$-flats $\longleftrightarrow$ Lines
\item $k$-flats $\longleftrightarrow$ $(d-k-1)$-flats
\end{itemize}

\subsection{Algorithm to Find an Ordinary Point in Time $O(n \log{n})$}
\label{sub:alg-hyperplanes}

Before we claim to have an algorithm to find an ordinary intersection, 
we must first be sure that a given set of hyperplanes determines an intersection point.
Let $h_i^{\perp}$ be a normal vector to the hyperplane $h_i$.
The following lemma shows the necessary and sufficient conditions for a general intersection point to exist.
\begin{lem}
\label{lem:normals-span}
Let $H=\{h_0, h_1, \ldots, h_{d-1} \}$ be a set of $d$ hyperplanes in $\mathbb{R}^d$.
The hyperplanes of $H$ determine an intersection point if and only if their normals form a basis for $\mathbb{R}^d$,
i.e., $span(h_0^{\perp}, h_1^{\perp}, \ldots, h_{d-1}^{\perp} \}) = \mathbb{R}^d$.
\end{lem}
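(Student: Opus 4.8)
The plan is to reduce the geometric statement to the standard fact from linear algebra that a square linear system $Ax = b$ has a unique solution if and only if the coefficient matrix $A$ is invertible, equivalently if and only if its rows (here, the normal vectors $h_i^{\perp}$) form a basis of $\mathbb{R}^d$. First I would fix, for each $i$, an equation $\langle h_i^{\perp}, x \rangle = c_i$ describing the hyperplane $h_i$, where $c_i \in \mathbb{R}$ is the appropriate constant. A point lies on all of $h_0, \dots, h_{d-1}$ precisely when it satisfies the system with coefficient matrix $A$ whose $i$-th row is $h_i^{\perp}$ and right-hand side $b = (c_0, \dots, c_{d-1})^{T}$. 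So ``$H$ determines an intersection point'' should be read as ``this system has exactly one solution,'' which is the natural meaning of $d$ hyperplanes meeting in a point (rather than in a higher-dimensional flat or not at all).

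For the reverse direction, I would assume $\{h_0^{\perp}, \dots, h_{d-1}^{\perp}\}$ spans $\mathbb{R}^d$; since there are $d$ of them, this means they are a basis, so $A$ is invertible and $x = A^{-1}b$ is the unique common point. For the forward direction, I would argue the contrapositive: if the normals do not span $\mathbb{R}^d$, then they are linearly dependent, so $\operatorname{rank}(A) < d$ and the kernel of $A$ is a nonzero subspace. Then either the system is inconsistent (no common point, e.g. some of the hyperplanes are parallel and distinct), or, if it is consistent, the solution set is a coset of $\ker A$ and hence contains a line rather than a single point. In both cases $H$ fails to determine an (isolated) intersection point, completing the equivalence.

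The main subtlety — really the only place where care is needed — is pinning down exactly what ``determine an intersection point'' means and making sure the affine constants $c_i$ don't interfere with the argument. The point is that whether the solution set is empty, a single point, or a positive-dimensional flat is governed entirely by $\operatorname{rank}(A)$ together with consistency, and for a $d\times d$ system full rank of $A$ forces both consistency and uniqueness simultaneously; conversely deficient rank rules out the ``exactly one point'' outcome regardless of $b$. Once this is spelled out, the proof is a direct invocation of the rank–nullity theorem and the characterization of invertibility, with no geometric estimates required.
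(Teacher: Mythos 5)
Your proof is correct, and it takes a slightly different route from the paper's. The paper argues via orthogonal complements: it asserts that the space orthogonal to the intersection $h'_0 \cap \cdots \cap h'_{k-1}$ is $\operatorname{span}(h_0^{'\perp},\ldots,h_{k-1}^{'\perp})$, and then reads off the dimension of the intersection from the identity $\dim(V) + \dim(V^{\perp}) = d$. You instead set up the explicit linear system $Ax = b$ with the normals as rows and invoke invertibility and rank--nullity. The two arguments rest on the same underlying linear algebra, but yours is the more careful of the two in exactly the place the paper is glossing: for affine hyperplanes not through the origin the intersection can be \emph{empty} (parallel distinct hyperplanes), in which case the paper's ``orthogonal complement of the intersection'' is not literally meaningful and one must really be speaking of the direction spaces; your treatment of the inconsistent case handles this cleanly, and you also correctly note that full rank of a $d\times d$ system forces consistency, so the ``if'' direction genuinely produces a point rather than the empty set. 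The paper's phrasing is shorter and makes the general-$k$ statement (useful elsewhere, e.g.\ for Lemma~\ref{lem:dk-hyperplanes}), while yours is tighter as a proof of the $k=d$ biconditional actually being claimed.
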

\begin{proof}
This follows from the observation that the space orthogonal to the intersection of hyperplanes 
is the span of the hyperplanes' normals.
That is, given a set of $k$ hyperplanes $H'=\{h'_0, h'_1, \ldots, h'_{k-1} \}$, then
$(h'_0 \cap h'_1 \cap \ldots \cap h'_{k-1})^{\perp} = span(h_0^{'\perp}, h_1^{'\perp},\ldots,h_{k-1}^{'\perp})$.
(Note that the sum of the dimension of a space and the dimension of its orthogonal space in $\mathbb{R}^d$ is always $d$).
\end{proof}

Given a set of $k$ vectors one can find a maximal linearly independent subset in time $O(k^2)$ using a method such as row reduction on a matrix.
By letting $k=n$, we could determine such a subset (i.e., a basis for the span) of $n$ vectors, for any $n$, in time $O(n^2)$.
However, one can do better.
\begin{lem}
\label{lem:basis-n-vectors}
A maximal linearly independent subset from a set of $n$ vectors can be found in time $O(n)$.
\end{lem}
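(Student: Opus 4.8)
The plan is to exploit the fact that, throughout this section, the ambient dimension $d$ is a fixed constant; consequently any computation whose running time is bounded by a function of $d$ alone costs only $O(1)$. I would process the $n$ input vectors one at a time, maintaining after the $i$-th step a subset $B_i$ that is a maximal linearly independent subset of the first $i$ vectors, stored in a form that makes the incremental update cheap. Specifically, keep the vectors currently in $B$ as the rows of a matrix in row-echelon form, recording the pivot column of each row.

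Given the next input vector $v$, perform forward elimination of $v$ against the current pivots; this requires $O(d^2)$ arithmetic operations. If the result is the zero vector then $v \in span(B)$ and $v$ is discarded; otherwise the reduced vector has a fresh pivot, $v$ is independent of $B$, and we append it to the echelon form (another $O(d^2)$ operations) before moving on. If at any stage $|B|$ reaches $d$ we have $span(B)=\mathbb{R}^d$, so no later vector can be independent of $B$ and the procedure halts, outputting the (at most $d$) vectors recorded in $B$.

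Correctness rests on the matroid/greedy invariant that $B$ remains a maximal independent subset of the vectors examined so far: $v$ is added exactly when it is not in the current span, which preserves independence, and every discarded vector lies in $span(B)$, so at termination $B$ spans, and hence is a maximal independent subset of, all $n$ vectors. For the running time, each of the $\le n$ vectors triggers one elimination ($O(d^2)$) and at most one echelon update ($O(d^2)$); since $d = O(1)$ this is $O(1)$ per vector and $O(n)$ in total, with the early-termination test adding nothing asymptotically.

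The step requiring the most care — really the only obstacle — is verifying that the per-vector cost is genuinely independent of $n$: the elimination and echelon-maintenance operations touch only the $O(d)$ rows and $O(d)$ columns of the stored basis together with the single incoming vector, and never revisit the previously discarded input, so their cost is a function of $d$ alone. It is also worth stating explicitly the model of computation (the real RAM, with unit-cost arithmetic on coordinates) under which the bound holds; in a bit-complexity model one would additionally have to control coefficient growth during elimination, but for the geometric applications here the real-RAM statement is what is needed.
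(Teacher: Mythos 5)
Your proposal is correct and takes essentially the same approach as the paper: both maintain a partial basis via incremental row reduction, exploiting that the fixed dimension $d$ makes each update cost $O(1)$, for $O(n)$ total. The only cosmetic difference is that you insert one vector at a time while the paper refills all zero rows of a $d\times d$ matrix in each iteration; your explicit remarks on the computation model and early termination are welcome additions but do not change the argument.
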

\begin{proof}
We assume the dimension $d$ to be constant, 
and thus, a maximal linearly independent set of $d$ vectors can be determined in constant time.

Let $S = \{v_0, v_1, \ldots, v_{n-1}\}$ be the set of $n$ vectors from which a maximal linearly independent set must be found.
Let $M$ be a matrix of dimension $d \times d$ with rows initialized to the vectors $v_0, v_1, \ldots, v_{d-1}$.
Use row reduction, tracking the vector corresponding to each row (e.g., updating as needed upon a row exchange),
to determine a linearly independent subset of these vectors. \emph{Time $O(1)$.} 

If these vectors span $\mathbb{R}^d$ then the algorithm may terminate.
Otherwise, discard the zero rows from the row reduced $M$,
replacing them with vectors from $S$ while still maintaining a correspondence between rows and vectors. \emph{Time $O(1)$.}

Begin the next iteration by performing a row reduction on $M$ to find a linearly independent set.
(Note that one needs only to reduce the rows of $M$ that were most recently added.)
After the row reduction on each iteration, zero rows are replaced with vectors from $S$.
The iterating continues until either $S$ is exhausted or a linearly independent set of size $d$ is found.
This repeats at most $n-d$ times. \emph{Time $O(n)$.}

The resulting set of vectors forms a maximal linearly independent set.
\end{proof}

The following lemma demonstrates the strength of the hypothesis needed by our algorithm.
\begin{lem}
\label{lem:dk-hyperplanes}
Given that no $d$ hyperplanes of a set in $\mathbb{R}^d$ pass through the same line,
no $k$ hyperplanes of that set contain the same $(d-k+1)$-flat for $3\leqslant k\leqslant d$.
\end{lem}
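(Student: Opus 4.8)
The plan is to deduce the statement for every $k$ from the single case $k=d$, which is precisely the hypothesis, by induction on $d-k$; the base case $d-k=0$ (that is, $k=d$) is the hypothesis itself. In the inductive step I will start from a hypothetical family of $k$ hyperplanes sharing a common $(d-k+1)$-flat and enlarge it to a family of $k+1$ hyperplanes sharing a common $(d-(k+1)+1)$-flat, by absorbing one further hyperplane of the arrangement. Iterating, this reaches $d$ hyperplanes sharing a common $1$-flat, i.e.\ a line, which contradicts the hypothesis. (Throughout I use that the arrangement has at least $d$ hyperplanes, which the phrase ``no $d$ pass through a line'' presupposes, so that the enlargement step always has a hyperplane available to consume.)

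So suppose $3\le k\le d-1$ and, for contradiction, that hyperplanes $h_1,\dots,h_k$ of the arrangement contain a common $(d-k+1)$-flat $F$; note $\dim F=d-k+1\ge 2$, so in particular $F$ contains affine lines. Pick any hyperplane $h$ of the arrangement outside $\{h_1,\dots,h_k\}$ and examine $F\cap h$. If $F\subseteq h$, then $h_1,\dots,h_k,h$ are $k+1$ hyperplanes all containing $F$, hence all containing a $(d-(k+1)+1)$-flat, and the inductive hypothesis applies (and if in fact $d$ hyperplanes turn out to contain $F$ we are done at once, since $F$ contains a line). If instead $\emptyset\ne F\cap h\subsetneq F$, then the affine functional defining $h$ is non-constant on $F$, so $F':=F\cap h$ is a flat of dimension $\dim F-1=d-(k+1)+1$ contained in the $k+1$ hyperplanes $h_1,\dots,h_k,h$, and again the inductive hypothesis applies. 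Both of these cases close the induction by pure dimension bookkeeping.

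The one remaining possibility is that $F\cap h=\emptyset$ for \emph{every} hyperplane $h$ of the arrangement outside $\{h_1,\dots,h_k\}$, and excluding this is the crux of the proof. Disjointness of a hyperplane from $F$ is equivalent to the direction space of $F$ lying inside that of the hyperplane (while $F$ is not contained in it); and $h_1,\dots,h_k\supseteq F$, so the direction space of $F$ sits inside the direction space of every hyperplane of the arrangement. Hence every normal vector of the arrangement is orthogonal to the $(d-k+1)$-dimensional direction space of $F$, so all of the normals lie in a common subspace of dimension $k-1\le d-2$. Then no $d$ of the normals can span $\mathbb{R}^d$, so by Lemma \ref{lem:normals-span} no $d$ hyperplanes of the arrangement determine an intersection point at all — contradicting the standing assumption of the problem that the arrangement does determine intersection points of exactly $d$ hyperplanes (otherwise there is nothing for the algorithm to find). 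I expect this last case to be the only genuine obstacle: the first two cases are routine dimension counts, but here one must actually invoke the non-degeneracy of the arrangement through Lemma \ref{lem:normals-span}, rather than merely count dimensions, and this is exactly the point at which the ``sharp dual'' hyperplane statement is more delicate than its easy point-set counterpart, where no parallelism can arise.
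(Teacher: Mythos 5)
Your proof is correct, and it is organized differently from the paper's in a way that matters. The paper argues in one shot: intersect the $k$ hyperplanes containing the $(d-k+1)$-flat with $d-k$ further hyperplanes, each of which can lower the dimension of the running intersection by at most one, so $d$ hyperplanes end up sharing a flat of dimension at least $1$. But it disposes of the degenerate case with the parenthetical ``(assuming no two are parallel\ldots\ the result is an empty set)'', which is not a resolution: the running intersection can become empty at some step even when no two hyperplanes are parallel, since a hyperplane can be disjoint from a flat of dimension at most $d-2$ without being parallel to any hyperplane containing that flat, and once the intersection is empty the dimension count yields no contradiction. Your Case 3 is exactly the missing argument: if every hyperplane outside the family misses $F$, then every normal of the arrangement is orthogonal to the $(d-k+1)$-dimensional direction space of $F$, so the normals span at most a $(k-1)$-dimensional subspace, and Lemma \ref{lem:normals-span} says no intersection point exists --- which is excluded at the point in the algorithm where the lemma is invoked, since the spanning check has already been passed. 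You are also right that this extra hypothesis is genuinely needed: as literally stated the lemma is false. In $\mathbb{R}^4$ take the hyperplanes $x_1=0$, $x_2=0$, $x_1+x_2=0$, $x_1=1$; no four pass through a line (their common intersection is empty), yet three of them contain the $2$-flat $x_1=x_2=0$. So the two routes perform the same dimension bookkeeping in your Cases 1--2, but your inductive version, with the spanning hypothesis made explicit, is the one that actually closes the empty-intersection case; it would be worth adding that hypothesis to the statement of the lemma.
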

\begin{proof}
Suppose hyperplanes $h_0, h_1, \ldots, h_{k-1}$ all contain a $(d-k+1)$-flat,
then $dim(h_0 \cap h_1 \cap \ldots \cap h_{k-1}) \geqslant d-k+1$.
Thus, by intersecting with an additional $(d-k)$ hyperplanes, for a total of $d$ hyperplanes, 
(assuming no two are parallel) the resulting flat will have dimension at least $(d-k+1)-(d-k)= 1$,
and $dim(h_0 \cap h_1 \cap \ldots \cap h_{d-1}) \geqslant 1$, i.e., a contradiction.
(If any two hyperplanes in the intersection are parallel, the result is an empty set.)
\end{proof}

We will assume that the intersection of two flats can be found in constant time.

\begin{thm}
Given an arrangement of $n$ hyperplanes in $\mathbb{R}^d$, not all parallel, not all passing through the same point, and no $d$ passing through a line,
there exists an algorithm to find an ordinary intersection point, or determine that none exists, in time $O(n \log{n})$.
\end{thm}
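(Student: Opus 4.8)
The plan is to induct on the dimension $d$, using the algorithm of Section~\ref{sub:alg-lines} as the base case $d=2$ and, for $d\geqslant 3$, peeling off one dimension by restricting $\mathcal{A}$ to a single hyperplane. I would begin with a feasibility check: using Lemma~\ref{lem:basis-n-vectors}, compute in time $O(n)$ a maximal linearly independent subset of the normals $h_0^{\perp},\dots,h_{n-1}^{\perp}$. If it has fewer than $d$ members, then no $d$ hyperplanes of $\mathcal{A}$ have spanning normals, so by Lemma~\ref{lem:normals-span} the arrangement has no intersection point at all, and the algorithm reports that no ordinary point exists. Otherwise it has exactly $d$ members; intersecting those $d$ hyperplanes yields, in constant time, a vertex $V$ of the arrangement (it is a single point, again by Lemma~\ref{lem:normals-span}), and I fix $h_0$ to be one of those $d$ hyperplanes, so that $V\in h_0$.

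Now restrict to $h_0\cong\mathbb{R}^{d-1}$: every hyperplane $h_i$ not parallel to $h_0$ meets $h_0$ in a $(d-2)$-flat, and inside $h_0$ these $(d-2)$-flats form an arrangement $\mathcal{A}_0$ of hyperplanes, computable in time $O(n)$ (no two coincide, since three hyperplanes of $\mathcal{A}$ sharing a $(d-2)$-flat would contradict Lemma~\ref{lem:dk-hyperplanes} with $k=3$). Because no $d$ hyperplanes of $\mathcal{A}$ share a line, no $d-1$ members of $\mathcal{A}_0$ share a line in $h_0$ --- such a line would lie on those $d-1$ hyperplanes together with $h_0$ --- so Lemma~\ref{lem:dk-hyperplanes} supplies the "no $(d-1)$ through a line" hypothesis needed to run the $(d-1)$-dimensional algorithm, and it also guarantees the recursion never reports "none exists", since $\mathcal{A}_0$ contains the vertex $V$ and hence has spanning normals. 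Having $V\in\mathcal{A}_0$ also rules out $\mathcal{A}_0$ being an all-parallel family, so the only way $\mathcal{A}_0$ can fail the remaining hypothesis is by being concurrent; one checks in $O(n)$ time whether every member of $\mathcal{A}_0$ passes through $V$. If not, $\mathcal{A}_0$ meets all three hypotheses, the inductive algorithm returns in time $O(n\log n)$ a point $P\in h_0$ lying on exactly $d-1$ members of $\mathcal{A}_0$, and lifting gives the result: $P$ lies on $h_0$ and on the corresponding $d-1$ hyperplanes of $\mathcal{A}$, every hyperplane parallel to $h_0$ misses $h_0$ and hence misses $P$, so exactly $d$ hyperplanes of $\mathcal{A}$ pass through $P$; their normals span $\mathbb{R}^d$ because they determine $P$, so $P$ is an ordinary intersection. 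Over the $O(d)$ levels of recursion (with $d$ constant) the total running time is $O(n\log n)$.

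The remaining, and most delicate, case is when $\mathcal{A}_0$ is concurrent, i.e.\ every hyperplane of $\mathcal{A}$ not parallel to $h_0$ passes through $V$; this is the higher-dimensional analogue of the "extra parallel line through $X$" situation of the previous section, and I expect it to be the main obstacle. Here one distinguishes two sub-cases. If only $d-1$ hyperplanes are not parallel to $h_0$, then those together with $h_0$ are the only hyperplanes incident to $V$ (the parallel ones avoid $h_0\ni V$), so $V$ is itself ordinary and we are done. Otherwise at least $d$ hyperplanes pass through $V$; since $\mathcal{A}$ is not all concurrent there is a hyperplane $M$ parallel to $h_0$, and restricting to $M$ gives an arrangement $\mathcal{A}_M$ whose members have the same directions as those of $\mathcal{A}_0$ (hence not all parallel), and which cannot be concurrent: a common point $W\in M$ of all of $\mathcal{A}_M$ would force at least $d$ hyperplanes of $\mathcal{A}$ through the line $VW$, contradicting the hypothesis. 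So $\mathcal{A}_M$ meets the hypotheses, and an ordinary point of $\mathcal{A}_M$ lifts exactly as before --- the hyperplanes parallel to $h_0$ miss it, and any non-parallel hyperplane through it passes through $V$ and so is already counted --- to an ordinary point of $\mathcal{A}$. Verifying that this case split is exhaustive and that each branch runs within the stated bound, in particular the concurrent fallback, is where the real care is needed; the restriction, recursion, and lifting forming the backbone of the argument are routine.
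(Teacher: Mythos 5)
Your proof is correct, but it follows a genuinely different decomposition from the paper's. The paper reduces from $\mathbb{R}^d$ to the plane in a single step: it sorts the hyperplanes into parallel families, extracts $d$ families whose normals form a basis (Lemma \ref{lem:basis-n-vectors}), intersects one member from each of $d-2$ of them to obtain a $2$-flat $M$ (Lemma \ref{lem:dk-hyperplanes} forces $\dim M=2$), traces the remaining hyperplanes onto $M$ as lines, and runs the planar algorithm of Section \ref{sub:alg-lines} there; if the traced lines are all concurrent at a point $p$, it swaps one of the $d-2$ hyperplanes for a parallel alternative to get a parallel $2$-flat $M'$, where concurrency would put $d$ hyperplanes through the line $\overline{pp'}$ and is therefore impossible. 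You instead peel off one dimension at a time, restricting to a hyperplane $h_0$ chosen through a vertex $V$, and your concurrent fallback (translate to a hyperplane parallel to $h_0$, then argue that no $d$ hyperplanes can contain $\overline{VW}$) is exactly the paper's $M'$ argument carried out one dimension higher. Both proofs rest on the same two lemmas and the same escape from concurrency; the paper's one-shot reduction confines the case analysis to a single level, while your induction repeats the concurrency check at each of the $d-2$ levels but never has to exhibit $d-2$ suitable families simultaneously. Your verifications --- distinctness of the traces via Lemma \ref{lem:dk-hyperplanes} with $k=3$, inheritance of the ``no $d-1$ through a line'' condition, spanning normals via $V$, and the count of exactly $d$ hyperplanes through the lifted point --- are all sound. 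One presentational point: your induction tacitly uses the strengthened statement that when the normals of the $(d-1)$-dimensional instance span, the recursive call returns an ordinary point rather than ``none exists''; you should state that strengthening explicitly as the induction hypothesis, since you rely on the recursion not bailing out. With $d$ constant the total running time is $O(n\log{n})$ as claimed.
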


\begin{proof}
Let $H$ be the set of $n$ hyperplanes, $\{h_0, h_1, \ldots, h_{n-1}\}$, in $\mathbb{R}^d$, $n \geqslant d$, 
not all parallel, no $d$ passing through a line and not all passing through the same point.

For each hyperplane, compute its normalized normal vector. 
The first non-zero coordinate of each normal vector should be positive (replacing the vector by its negative if necessary),
so that two hyperplanes are parallel if and only if their normal vectors are the same.
\emph{Time: $O(n)$}.

Sort the hyperplanes, lexicographically, by their normals into $k$ families of parallel hyperplanes.
That is, let $H^{(0)}$, $H^{(1)}$, \ldots, $H^{(k-1)}$ each be a set of hyperplanes
such that for any two $h \in H^{(i)}$, $h' \in H^{(j)}$, $h$ and $h'$ are parallel if and only if $i = j$.
Since parallelism is an equivalence relation, this forms a partition on the set $H$.
So, obviously, $|H^{(0)}| + |H^{(1)}| + \ldots + |H^{(k-1)}| = |H| = n$. 
Let $h^{(j)}_i$, for $0 \leqslant i < |H^{(j)}|$, be the members of the set $H^{(j)}$.
\emph{Time: $O(n \log{n})$}.

For each set of hyperplanes, $H^{(i)}$, there is a distinct normal.
Use the algorithm described in Lemma \ref{lem:basis-n-vectors} to find a maximal linearly independent set from these normal vectors,
tracking for each normal the associated hyperplane family.
If the maximal linearly independent set does not span $\mathbb{R}^d$, then by Lemma \ref{lem:normals-span} there will exist no intersection point,
and the algorithm is finished
\emph{Time: $O(n)$.}

From now on we assume that an intersection exists, and therefore the number of hyperplane families, $k$, is at least $d$.

Let $M$ be the plane formed by intersecting a member from each of the first $d-2$ sets of hyperplanes 
from the $d$ sets whose normals formed the basis in the previous step.
Without loss of generality, we will assume that these $d-2$ families are $H^{0}, H^{1}, \ldots, H^{d-3}$.
Thus, $M= h^{(0)}_0 \cap h^{(1)}_0 \cap \ldots \cap h^{(d-3)}_0$.
By Lemma \ref{lem:dk-hyperplanes}, $dim(M)=2$. \emph{Time to determine $M$: $O(n)$.}

Let $L$ be the lines formed by intersecting each of the remaining hyperplane families with $M$, 
i.e., $L = \{l_i = M \cap h_i : h_i \in \left( H^{(d-2)} \cup H^{(d-1)}\cup \ldots \cup H^{(k-1)} \right) \}$, 
where each $l_i$ is a line.
(Note that the hyperplanes used to form lines on $M$ all intersect 
$M$ since they are not parallel to any of the hyperplanes used in the construction of $M$.)
\emph{Time to determine the set $L$: $O(n)$.}

Consider the following cases.

\item{Case 1: The lines of $L$ are all parallel on $M$.}

Consider these hyperplanes in projective space.  There exists a point on the hyperplane at infinity, $p_{\infty}$, that is incident to all lines of $L$,
and thus, incident to all hyperplanes in $\{H^{(d-2)},  H^{(d-1)}, \ldots, H^{(k-1)}\}$.
Furthermore, $p_{\infty}$ is incident to all hyperplanes in $\{H^{(0)}, H^{(1)}, \ldots, H^{(d-3)}\}$, since $M$ and all of its constituent hyperplanes, 
and thus, the parallel hyperplane families, pass through $p_{\infty}$.
Since the hyperplanes of $H$ all share a common point at infinity, there will be no finite intersection point.
However, since we showed that the hyperplane normals spanned $\mathbb{R}^d$, this case is not possible.

\item{Case 2: The lines of $L$ are all concurrent on $M$, and there are at least three of them.}

For this case, we know that $|H^{(d-2)}| = |H^{(d-1)}|= \ldots = |H^{(k-1)}| = 1$.
Let $p$ be the point on $M$ at which the lines of $L$ all cross. \emph{Time to determine if all concurrent: $O(n)$.}

Since the lines of $L$ all pass through $p$, we will construct another plane $M'$ (parallel to $M$) 
by using one alternative member from one of the first $d-2$ parallel families. 

If no alternative member exists, then all hyperplane families contain just one member. 
Thus, all hyperplanes pass through point $p$, in violation of the hypothesis. 
In this case, no ordinary point exists and the algorithm terminates.

Without loss of generality, assume $h^{(0)}_1$ (i.e., a second member from the set $H^{(0)}$) is the alternative member used to construct $M'$.
That is, let $M'= h^{(0)}_1 \cap h^{(1)}_0 \cap \ldots \cap h^{(d-3)}_0$.
\emph{Time to construct $M'$: $O(n)$.}

Construct the set $L'$ in an analogous manner to the construction of $L$, 
i.e., let $L' = \{l'_i = M' \cap h_i : h_i \in \left( H^{(d-2)} \cup H^{(d-1)}\cup \ldots \cup H^{(k-1)} \right) \}$.
The lines of $L'$ cannot all be parallel by the same argument used in Case 1.

Assume the lines of $L'$ are again all concurrent at a finite point $p'$.
Then the line determined by $p$ and $p'$, i.e. $\overline{pp'}$, 
is contained in the $d-3$ hyperplanes used to construct both $M$ and $M'$, 
which excludes the two hyperplanes used from the set $H^{(0)}$. 
(An intersection of hyperplanes containing two distinct points, also contains the line that connects them.)
That is, $\overline{pp'}$ is contained in $h^{(1)}_0,  h^{(2)}_0, \ldots, h^{(d-3)}_0$.
The line $\overline{pp'}$ is also contained in the three or more hyperplanes that formed the lines of $L$ and $L'$.
Altogether, there must be at least $(d-3)+3=d$ hyperplanes containing the line $\overline{pp'}$, in violation of our hypothesis.
\emph{Time to determine if all concurrent on $M'$: $O(n)$.}

Therefore, $M'$ contains an ordinary intersection, and we may proceed to the next case, mutatis mutandis.

\item{Case 3: The lines of $L$ form an ordinary intersection on $M$.}

The ordinary intersection formed by the lines of $L$ can be found using the algorithm given in Section \ref{sub:alg-lines}.
Assume $l_i$ and $l_j$ form an ordinary intersection on $M$.
This point is the intersection of the hyperplanes $h^{(0)}_0 \cap h^{(1)}_0 \cap \ldots \cap h^{(d-3)}_0$ 
and exactly two other hyperplanes (which formed $l_i$ and $l_j$), and thus, at the intersection of $d$ hyperplanes.  
Therefore, we have found an ordinary intersection.
\emph{Time: $O(n \log{n})$.}
\end{proof}

\section{Arrangements of Pseudolines}
\subsection{Ordinary Intersection Points}

Now consider an arrangement of pseudolines, any two of which cross and not all at the same point.
Any such arrangement contains an ordinary intersection, and the best result in this area is 
that of Csima and Sawyer \cite{CS93}, who extended their
$\frac{6n}{13}$ result to also include ordinary intersections among pseudolines in the projective plane.
An elegant proof of the existence of ordinary intersections can be found 
using Euler's formula to find an inequality due to Melchior \cite{Mel41}. See Felsner's book \cite{Fel04} for 
excellent coverage of this and other results related to Sylvester's Problem.

Arrangements of pseudolines, as discussed in this paper, have certain properties that are assumed:
\begin{itemize}
\item Each pseudoline goes off to infinity in both directions.
\item No pseudoline crosses itself.
\item Each pair of pseudolines intersects at exactly one point, and at that point cross.
\item More than two pseudolines may cross at a single point (otherwise the intersection is ordinary).
\item The pseudolines do not all cross at the same point (i.e., there is more than one intersection point).
\end{itemize}
See \cite{Fel04} for a more complete explanation of pseudoline arrangements and their properties.

It is assumed that given a point $P$ and a pseudoline $L$, one can determine whether $P$ lies on $L$ in time $O(1)$.
Therefore, in an arrangement of $n$ pseudolines, the pseudolines that cross $P$ can be determined in time $O(n)$.
It is also assumed that the intersection point of any two pseudolines can be found in time $O(1)$.

Recently a couple of results related to the following algorithm have been published or submitted.
Pretorius and Swanepoel in \cite{PS08} provide proof of a theorem that generalizes both
Sylvester-Gallai and Motzkin-Rabin. Their proof utilizes a sequence of successively smaller triangles
that terminates with finding the desired intersection point.
A similar method is also used by Lenchner in \cite{Len08}.
Note that one might also see similarity between these proofs and Motzkin's as published in \cite{EP95} 
(i.e. they utilize what Motzkin calls ``characteristic triangles'').

The algorithm described below was inspired by the recent proof given by Lenchner in \cite{Len08}.
This algorithm can be used to find ordinary point in an arrangement of lines, and by duality
an ordinary line determined by a set of points. 
We must also mention that a $O(n^2)$ algorithm could be obtained by an 
incremental construction of the arrangement that tracks the intersections that are created.
However, such an algorithm would not also prove the existence of an ordinary intersection point.

\subsection{Algorithm to Find an Ordinary Point in Time $O(n^2)$}

\begin{figure}
\centering
\includegraphics[height=7cm,width=8cm]{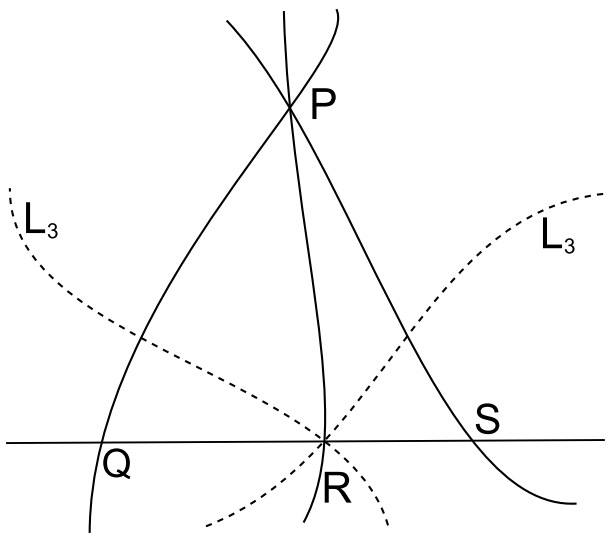}
\caption{If point $R$ is not ordinary, then a third pseudoline $L_3$ must cross either segment $PQ$ or $PS$.}
\label{fig:tri_alg}
\end{figure}

\begin{thm}
An ordinary intersection can be found in an arrangement of pseduolines in time $O(n^2)$.
\end{thm}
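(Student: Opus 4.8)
The plan is to transplant the two-dimensional line algorithm of Section~\ref{sub:alg-lines} to pseudolines. The height-ordering that drives Lemmas~\ref{lem:closest_intsct}--\ref{lem:closest_is_ord} is unavailable for pseudolines, so I would replace it by the ``characteristic triangle'' descent behind Lenchner's recent proof \cite{Len08}: maintain a triangle bounded by three pseudolines of $\mathcal{A}$, and repeatedly replace it by a strictly smaller one until an ordinary intersection is exposed.

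The geometric engine is the pseudoline analogue of those two lemmas, pictured in Figure~\ref{fig:tri_alg}. If $\triangle PQS$ is bounded by three pseudolines of $\mathcal{A}$ and $R$ is an arrangement vertex in the relative interior of side $QS$, then either $R$ is ordinary or the third pseudoline $L_3$ through $R$ must cross the relative interior of side $PQ$ or of side $PS$: the arc of $L_3$ leaving $R$ toward $P$ enters the triangle, and since it cannot meet $\ell_{QS}$ a second time it must leave through one of the two sides meeting at $P$. In that case $L_3$ cuts $\triangle PQS$ into two triangles, each again bounded by three pseudolines of $\mathcal{A}$ and each a proper subset of $\triangle PQS$, and I would recurse on the piece that keeps $R$ on its boundary. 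I also need the ``empty'' case: if the current triangle has no arrangement vertex in its interior or on a side's relative interior (so it is a triangular face) and its apex $X$ has degree at least three, then, exactly as in the final paragraph of the proof of Lemma~\ref{lem:closest_is_ord}, the extra pseudoline through $X$ meets the base pseudoline at a point $R'$ outside the base edge, and the ``extension'' triangle cut off by that pseudoline is non-empty, so the descent can be restarted there. Since any two pseudolines cross, there is no ``parallel line'' alternative to handle, unlike in Lemma~\ref{lem:closest_is_ord}.

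So the algorithm is: fix a base pseudoline $L_0$, sort its at most $n-1$ crossings along it; if any crossing has degree two, output it. Otherwise take the candidate triangle above a leftmost edge of $L_0$ (the pseudoline analogue of Lemma~\ref{lem:closest_intsct}); if it is non-empty, begin the descent, and if it is a triangular face, jump once to its extension triangle and descend from there. Each descent step replaces the triangle by a strictly smaller one while keeping a witness vertex on the boundary, so the process can never reach an empty triangle and must halt by uncovering a vertex of degree two. Taking as monovariant, say, twice the number of arrangement vertices interior to the current triangle plus the number on the relative interiors of its three sides --- a nonnegative integer that strictly drops at every step --- shows the process terminates, and running the same argument with no reference to time re-proves that an ordinary intersection exists.

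The step I expect to be the main obstacle is the $O(n^{2})$ running time. The monovariant only caps the number of descent steps at $O(n^{2})$, and a naive $O(n)$ cost per step (rescanning all pseudolines to find the next cut) would give $O(n^{3})$. To reach $O(n^{2})$ I would (i) always cut toward the sub-triangle whose interior contains strictly fewer arrangement vertices, so that the monovariant falls by a definite amount each step, and (ii) maintain, for each of the at most three pseudolines currently bounding the triangle, the sorted list of its crossings, updating these lists incrementally whenever one bounding pseudoline is swapped out, so that locating the next cut and testing whether the triangle still contains an arrangement vertex cost only time proportional to the pseudolines that actually meet the current triangle. The crux is then a charging argument bounding the total work across all steps by $O(n^{2})$ --- each of the $\binom{n}{2}$ intersection points being charged only $O(1)$ times --- together with the easy observation that the ``empty-face to extension-triangle'' jump is needed at most once, so it does not inflate the bound.
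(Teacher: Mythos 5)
You have the right family of argument (the Lenchner-style shrinking triangle, which is indeed what the paper uses), but two steps fail as written. First, the geometry of the cut: a pseudoline $L_3$ entering the triangle at a point $R$ in the relative interior of side $QS$ and exiting through the relative interior of side $PQ$ cuts $\triangle PQS$ into a triangle and a \emph{quadrilateral}, not into two triangles, and both pieces keep $R$ on their boundary, so your recursion rule is not well defined and your invariant (``a triangle bounded by three pseudolines with an arrangement vertex in the relative interior of a side'') is not visibly preserved: the new crossing $R' = L_3 \cap PQ$ is a \emph{vertex} of the triangular piece rather than a point in the relative interior of one of its sides, and a further pseudoline through $R'$ may enter the quadrilateral piece instead of the triangle, stalling the descent. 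The fix, which is what the paper does, is to keep the cevian in the picture: maintain an apex $P$ with three pseudolines through it meeting a base pseudoline at $Q$, $R$, $S$, and when the dividing line $L_3$ through the middle point $R$ crosses (say) side $QP$, recurse into the half $\triangle QPR$ cut off by the pseudoline through $P$ and $R$, taking $R$ as the new apex, the pseudoline $QP$ as the new base, and $L_3 \cap QP$ as the new middle point, which does lie in the relative interior of the new base segment. With this initialization (start from any non-ordinary intersection $P$ and drop three of its pseudolines onto $L_0$) your ``empty triangular face / extension triangle'' detour never arises.

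Second, and more importantly, the running time, which you yourself flag as the main obstacle and do not close. Your vertex-counting monovariant only bounds the number of descent steps by $O(n^2)$, which with the unavoidable $O(n)$ scan per step (to test whether the current candidate vertex is ordinary) gives $O(n^3)$; the proposed repair --- incrementally maintained sorted crossing lists plus an amortized $O(1)$ charging argument --- is exactly the part left unproven, and it is unnecessary. The missing observation is the paper's Lemma \ref{lem:div_line}: each dividing line lies on the boundary of the next triangle and hence never again crosses the interior of any subsequent triangle, so no pseudoline is used as a dividing line twice. That caps the number of descent steps at $n$, and $n$ steps at $O(n)$ each gives $O(n^2)$ outright. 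Without that lemma, or some equivalent linear bound on the number of steps, your argument does not reach the claimed bound.
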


\begin{proof}
Let $L_0$, $L_1$ and $L_2$ be any three pseudolines of arrangement $\mathcal{A}$ that intersect at three distinct points.
\emph{Time to find three such lines: O(n).}

Let $P$ be the intersection point of $L_1$ and $L_2$. 
If $P$ is ordinary, then the algorithm is done.
\emph{Time to determine whether P is ordinary: O(n).}

Otherwise, there are at least three pseudolines ($L'_1, L'_2$ and $L'_3$) crossing at $P$.
Consider $L_0$ to be ``horizontal'' and $P$ ``above'' $L_0$.  Let points $Q$, $R$ and $S$ be the points of intersection left to right on $L_0$ 
of the three pseudolines crossing at $P$ (i.e., $L'_1, L'_2$ and $L'_3$).  If $R$ is an ordinary intersection, then the algorithm is done.
\emph{Time to determine whether R is ordinary: O(n).}

Otherwise there is a pseudoline, $L_3$, crossing at $R$ that either crosses the finite segment $QP$ or $PS$.  \emph{Time to determine where $L_3$ crosses: $O(1)$.}

Without loss of generality, assume this pseudoline crossing $R$ also crosses $QP$.  This pseudoline is defined to be the triangle's \emph{dividing line}. 
The configuration will now be reoriented for recursion, letting $R$ be the intersection of $L_3$ 
with pseudoline $QP$, $P$ the previous $R$, $Q$ the previous $P$, and $S$ the previous $Q$.  \emph{Time to reorient the configuration for recursion: $O(1)$.}

The following lemma states that this recursion repeats no more than $n$ times, yielding a time $O(n^2)$ algorithm.
\end{proof}

\begin{lem}
\label{lem:div_line}
No pseudoline is used by the algorithm as a \emph{dividing line} more than once.
\end{lem}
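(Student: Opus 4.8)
The plan is to bookkeep how the current triangle and the pseudolines carrying its three sides evolve from one recursive step to the next, and to show that once a pseudoline has served as a dividing line it can never again cross the interior of the (strictly shrinking) triangle.

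First I would fix notation for step $k$ of the recursion: let $T_k=\triangle P_kQ_kS_k$ be the current triangle, with apex $P_k$, base on a pseudoline $B_k$, and $Q_k,R_k,S_k$ the three points (left to right) where the pseudolines through $P_k$ meet $B_k$; write $A_k=\overline{P_kQ_k}$ and $E_k=\overline{P_kS_k}$ for the two side pseudolines, $C_k=\overline{P_kR_k}$ for the apex chord, and $D_k$ for the dividing line. Because $D_k$ crosses the open segment $P_kQ_k$ or $P_kS_k$ and crosses $B_k$ only at $R_k$, it misses the apex $P_k$, so $D_k\notin\{A_k,B_k,C_k,E_k\}$; moreover any dividing line crosses the interior of its triangle while avoiding the apex. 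Reading the reorientation rule off the algorithm, one checks
\[
A_{k+1}=C_k,\qquad C_{k+1}=D_k,\qquad E_{k+1}=B_k,\qquad B_{k+1}\in\{A_k,E_k\},
\]
the last choice depending on which side $D_k$ crossed, and one checks that $T_{k+1}$ is exactly one of the two sub-triangles into which the chord $C_k$ cuts $T_k$; in particular $T_{k+1}\subsetneq T_k$, so the triangles are strictly nested.

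Now I would follow the pseudoline $D=D_i$. It first appears as $C_{i+1}=D$, the apex chord of $T_{i+1}$ (hence through the apex $P_{i+1}$), and then as $A_{i+2}=C_{i+1}=D$, a side of $T_{i+2}$. From the displayed identities, $\{A_{k+1},B_{k+1},E_{k+1}\}\subseteq\{A_k,B_k,C_k,E_k\}$, so for $k\ge i+2$ the pseudoline $D$ stays among the three side pseudolines $\{A_k,B_k,E_k\}$ until some step $m$ at which it is the side ($A_m$ or $E_m$) not retained as $B_{m+1}$. Whenever $D$ is a side pseudoline of $T_j$ it meets $T_j$ in exactly that side, and when $D=C_{i+1}$ it runs through $P_{i+1}$; in either case $D\ne D_j$, since the dividing line at step $j$ crosses the interior of $T_j$ and misses its apex. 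It remains to treat $j>m$: inspecting which half of $T_m$ the algorithm keeps shows that $T_{m+1}$ meets the dropped side $D$ only at the single vertex $P_m$, and since $T_j\subseteq T_{m+1}$ for $j\ge m+1$ while no interior point of $T_j$ lies on the boundary of $T_{m+1}$, the pseudoline $D$ meets no later $T_j$ in its interior, so again $D\ne D_j$. Together with the immediate case $j=i+1$ (there $D_i=C_{i+1}$ passes through $P_{i+1}$ while $D_{i+1}$ does not), this gives $D_i\ne D_j$ for all $j>i$.

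The step I expect to be the main obstacle is verifying the relabelling identities and, above all, the ``squeezing'' claim: that after a pseudoline is dropped it touches every later nested triangle only at a vertex. This rests on pinning down precisely which of the two sub-triangles is retained at each step (case $L$ versus case $R$ of the algorithm), together with the arrangement property that two pseudolines cross in exactly one point, so that a pseudoline containing a full side of $T_j$ must be the pseudoline carrying that side. The remaining bookkeeping is routine.
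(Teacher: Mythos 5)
Your argument is correct and is essentially the paper's own proof, which observes in two sentences that each dividing line crosses the interior of the current triangle, ends up on the boundary of the retained part, and hence cannot cross the interior of any of the subsequent nested triangles. Your version simply makes explicit the relabelling identities, the strict nesting $T_{k+1}\subsetneq T_k$, and the vertex-only contact after a side is dropped — details the paper leaves implicit.
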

\begin{proof}
Each \emph{dividing line} $L$ crosses the interior of a triangle, dividing it into two parts. 
All subsequent \emph{dividing lines} used by the algorithm must cross the interior of one of those parts, 
of which $L$ lies on the boundary.
\end{proof}

This second algorithm has a potential advantage over our first since it may stop early, 
possibly at the first intersection $P$ (i.e. time $\Omega(n)$). By duality,
this algorithm also can be used to find ordinary lines in an arrangement of points, 
with the same time complexity.

\subsection{Existence of Monochromatic Points in a Bichromatic Arrangement}
In a bichromatic arrangement of pseudolines, any two crossing and not all crossing at the same point,
a monochromatic intersection point always exists, but it might not exist for both colors. 
An arrangement containing monochromatic intersections of only one color is called ``biased'' (see \cite{Gru99}). 
The existence of biased arrangements requires any algorithm in search of 
a monochromatic intersection to consider both colors
(or at least be run twice if limited to a specific color).

The previous algorithm will now be modified to find a monochromatic intersection. 
While Chakerian \cite{Cha70} and others have proven that lines in the real projective plane always determine a monochromatic
intersection (and an argument similar to theirs might be extended to include pseudolines),
the present authors are unaware of a proof that explicitly
extends this result to pseudolines in the euclidean plane. The algorithm below provides such a proof.

In \cite{PS04}, Pretorius and Swanepole
provide an algorithm (i.e. an algorithmic proof) to find a monochromatic line in a bichromatic set of points, apparently in 
time $O(n^2)$ (although the present authors are unaware of a ``worst-case'' instance for their algorithm).
Note that the bichromatic pseudoline problem is more general than that of points, since not every pseudoline arrangement
has a dual.

As with the previous algorithm, it is assumed that given a point $P$ and a pseudoline $L$, 
one can determine whether $P$ lies on $L$ in time $O(1)$.
Therefore, in an arrangement of $n$ pseudolines, the pseudolines that cross $P$ can be determined in time $O(n)$.
It is also assumed that the intersection point of any two pseudolines can be found in time $O(1)$.

\subsection{Algorithm to Find a Monochromatic Intersection in a Bichromatic Arrangement of Pseudolines}

\begin{figure}
\centering
\includegraphics[height=7cm,width=8cm]{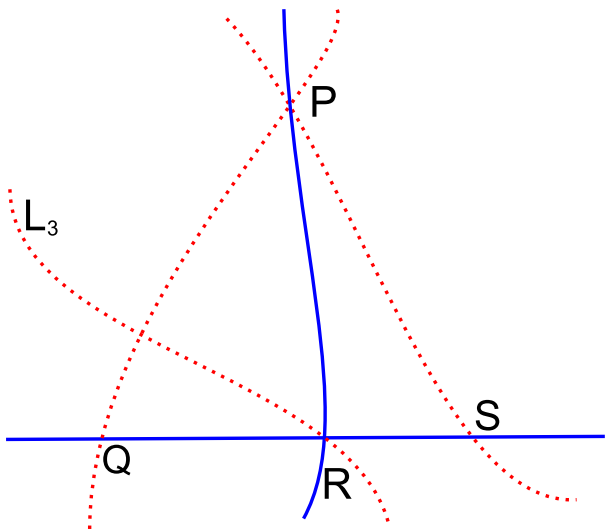}
\caption{If point $R$ is monochromatic, then a third pseudoline $L_3$ with a different color must cross either segment $PQ$ or $PS$.}
\label{fig:bichro_tri_alg}
\end{figure}

\begin{thm}
A monochromatic intersection in a bichromatic arrangement of pseudolines may be found in time $O(n^2)$.
\end{thm}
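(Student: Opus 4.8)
The plan is to reuse the shrinking--triangle scheme of the preceding theorem, now attaching a colour invariant to every triangle. At a generic step the algorithm holds a (pseudo-)triangle $\triangle QPS$ with apex $P$, base $QS$ on a pseudoline $L_0$, and a distinguished point $R$ in the relative interior of the base that lies on a pseudoline $L_2'$ through $P$; the invariant is that $L_0$ and $L_2'$ have one colour, say $c$, while both sides $PQ$ and $PS$ have the opposite colour $\bar c$. Since $R$ lies on the two $c$-coloured pseudolines $L_0$ and $L_2'$, the test at each step is simply: if every pseudoline through $R$ has colour $c$, halt and output $R$, which is then monochromatic. This test, and everything else in an iteration, costs $O(n)$.

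If the test fails, take any pseudoline $L_3$ of colour $\bar c$ through $R$. As $R$ is interior to the base and $L_3 \neq L_0$, the pseudoline $L_3$ meets $L_0$ only at $R$, so it enters $\triangle QPS$ and leaves it across one of the two $\bar c$-coloured sides, say $PQ$, at an interior point $T$. I would then recurse on the triangle with apex $R$, base the sub-segment of $PQ$ (colour $\bar c$), distinguished point $T = L_3 \cap PQ$ on that base, middle pseudoline $L_3$ (colour $\bar c$, through $R$ and $T$), and the two sides lying on $L_2'$ and $L_0$ (both colour $c$). This triangle is strictly inside $\triangle QPS$, it satisfies the invariant with the two colours exchanged, and $L_3$ is the \emph{dividing line} of $\triangle QPS$ in precisely the sense of Lemma~\ref{lem:div_line}; hence no pseudoline is ever a dividing line twice, the recursion runs at most $n$ times, and the whole algorithm costs $O(n^2)$. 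Since it stops only upon exhibiting an actual monochromatic intersection, it also supplies the missing existence proof for euclidean pseudoline arrangements, and because the invariant is symmetric in the two colours it will report whichever colour of monochromatic point the descent reaches --- as the possibility of biased arrangements demands.

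The step I expect to be the real difficulty is the initialisation: producing a first triangle that obeys the invariant. From any non-monochromatic point --- for instance the crossing of a red and a blue pseudoline, which exists whenever both colours occur --- I would assign its majority colour to the sides, a pseudoline of the minority colour through it to $L_2'$, and a pseudoline of the minority colour \emph{not} through it to $L_0$. The obstruction is that such a configuration may not be available: the chosen point could be an ordinary bichromatic point, or every pseudoline of the colour wanted for $L_0$ could pass through it. I plan to settle these degenerate cases outright in $O(n)$. For instance, if no non-monochromatic point carries three or more pseudolines then every non-monochromatic point is ordinary, and scanning the crossings along one red pseudoline $r$ either turns up two monochromatic red points on $r$ (report one) or shows that $r$ crosses two distinct blue pseudolines at ordinary points, whose own crossing is then forced monochromatic blue; the remaining degeneracies (some colour used at most once, or all pseudolines of a colour concurrent through the starting point) reduce to the non-concurrence hypothesis and are dispatched in linear time in the same spirit. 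Once a valid initial triangle is in hand, the descent above runs unchanged.
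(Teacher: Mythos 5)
Your shrinking-triangle descent is exactly the paper's: the same invariant (base and middle pseudoline of one colour $c$, the two sides of colour $\bar c$, colours swapping at each level), the same exit argument for a $\bar c$ pseudoline through $R$, and the same appeal to Lemma~\ref{lem:div_line} for the $O(n)$ bound on the recursion depth. Where you diverge is the initialisation, and that is where there is a genuine gap. Starting from a non-monochromatic point $P$, taking two majority-colour ($\bar c$) pseudolines through $P$ as the sides, a minority-colour pseudoline through $P$ as $L_2'$, and a minority-colour pseudoline avoiding $P$ as $L_0$, you never ensure that $R = L_2' \cap L_0$ lies \emph{between} $Q$ and $S$ on $L_0$. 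The three pseudolines through $P$ meet $L_0$ in three points, and nothing forces the $L_2'$-crossing to be the middle one; if it is leftmost or rightmost, then $R$ falls outside the base of $\triangle PQS$, the invariant fails at step zero, and a $\bar c$ pseudoline through $R$ need not cross either side. You anticipate other obstructions (ordinary bichromatic $P$, all $c$-coloured pseudolines concurrent at $P$), but not this one, and it is not cured by the degenerate-case analysis you sketch.

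The paper sidesteps this by building the triangle in the opposite order: fix any pseudoline $L_0$ (say blue) first, and let $Q$ and $S$ be the \emph{leftmost and rightmost} intersection points on $L_0$, so that every other pseudoline meets $L_0$ in the closed interval from $Q$ to $S$. If $Q$ or $S$ carries no red pseudoline it is monochromatic blue and the algorithm stops; otherwise a red pseudoline through $Q$ and one through $S$ meet at a point $P$ (reflect so that $P$ lies above $L_0$), and if $P$ is not monochromatic red, a blue pseudoline $L_2$ through $P$ meets $L_0$ at a point $R$ that is automatically strictly between $Q$ and $S$ ($L_2$ cannot pass through $Q$ or $S$, since it already shares the point $P$ with the red pseudolines through them and two pseudolines cross only once). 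This produces the invariant in $O(n)$ time with no further case analysis; with that replacement your descent goes through unchanged.
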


\begin{proof}
Let $L_0$ be a pseudoline from an arrangement, $\mathcal{A}$, 
containing $n$ pseudolines each colored one of red or blue, 
any two of which cross but not all cross at the same point. 
Consider $L_0$ to be ``horizontal'' and, without loss of generality, assume its color is blue. \emph{Time: $O(1)$.}

Let $Q$ and $S$ be the leftmost and rightmost intersection points on $L_0$.
Assume a red pseudoline crosses at $Q$, another red pseudoline crosses at $S$,
and let $P$ be their intersection point ``above'' $L_0$. 
If this assumption is false (i.e. red pseudolines do not cross both $Q$ and $S$), 
then at least one of $Q$ or $S$ is monochromatic and the algorithm is done. 
\emph{Time to find $Q$ and $S$ and determine whether they are monochromatic: $O(n)$.}

If $P$, the intersection of the red pseudolines crossing $Q$ and $S$, is monochromatic then again, the algorithm is done.
Otherwise, a blue pseudoline $L_2$ crosses $P$ and 
intersects $L_0$ at point $R$, between $Q$ and $S$. See Figure \ref{fig:bichro_tri_alg}.

At this point, the ``setup'' is complete and we begin the first step of a (potentially) 
recursive process to find a monochromatic intersection.

If $R$ it is monochromatic (blue), then the algorithm is done. \emph{Time to determine whether $R$ is monochromatic: $O(n)$.}

Otherwise, a red pseudoline, $L_3$, crosses $R$ and intersects either segment $PQ$ or segment $PS$. Without loss of generality,
assume it crosses $PQ$.  This pseudoline is defined to be the triangle's (i.e. $\triangle PQR$'s) \emph{dividing line}.
The configuration will now be reoriented for recursion, letting $R$ be the intersection of $L_3$ 
with pseudoline $QP$, $P$ the previous $R$, $Q$ the previous $P$, and $S$ the previous $Q$.  \emph{Time to reorient the configuration for recursion: $O(1)$.} 

Note that each step of the recursive process, expects $R$ to possess a different, possibly monochromatic, color.  
So for the first and all other odd numbered steps it would expect ``blue'', and likewise ``red'' for the even.
Again, we refer to Lemma \ref{lem:div_line} to show that this algorithm runs in time $O(n^2)$.
\end{proof}

\section{Conclusion}
It is conjectured that both $O(n \log{n})$ algorithms presented here are within a constant factor of the best upper bound for time.

It would be interesting to know whether an algorithm to find an ordinary intersection in an arrangement
of pseudolines could also perform in time $O(n \log{n})$. Likewise, it would be
interesting to know whether an algorithm to find a monochromatic intersection
in a bichromatic arrangement of pseudolines (or even lines) could perform in time $O(n \log{n})$.

\bibliographystyle{amsplain}
\bibliography{finding_intersections}

\end{document}